\documentclass{llncs}

\usepackage{amsmath,amssymb}

\usepackage{float}
\usepackage[dvips]{epsfig}
\usepackage{epsf}
\usepackage{url}
\usepackage{graphicx}

\def\amod#1 #2{#1\ ({\rm mod}\ #2)}

\frontmatter

\title{On the Number of Unbordered Factors}

\author{Daniel Go\v{c} \and Hamoon Mousavi \and Jeffrey Shallit}

\institute{School of Computer Science \\
University of Waterloo \\
Waterloo, ON  N2L 3G1 \\
Canada \\
\email{\{dgoc,hamoon.mousavihaji,shallit\}@cs.uwaterloo.ca}
}

\begin{document}

\maketitle

\begin{abstract}
We illustrate a general technique for enumerating factors of $k$-automatic
sequences by proving a
conjecture on the number $f(n)$ of unbordered factors of the Thue-Morse
sequence.  We show that $f(n) \leq n$ for $n \geq 4$ and that
$f(n) = n$ infinitely often.
We also give examples of automatic sequences having exactly $2$ unbordered
factors of every length.
\end{abstract}

\section{Introduction}

In this paper, we are concerned with certain factors
of $k$-automatic sequences.  Roughly speaking, a sequence ${\bf x} =
a_0 a_1 a_2 \cdots$ over a finite alphabet
$\Delta$ is said to be $k$-automatic if there exists a
finite automaton that, on input $n$ expressed in base $k$, reaches
a state with output $a_n$.  Automatic sequences were popularized
by a celebrated paper of Cobham \cite{Cobham:1972} and have been
widely studied; see \cite{Allouche&Shallit:2003}.

More precisely, let $k$ be an integer $\geq 2$, and set
$\Sigma_k = \lbrace 0, 1, \ldots, k-1 \rbrace$.  Let
$M = (Q, \Sigma_k, \Delta, \delta, q_0, \tau)$ be a deterministic
finite automaton with output (DFAO) with transition function
$\delta:Q \times \Sigma_k \rightarrow Q$ and output function
$\tau:Q \rightarrow \Delta$.  Let $(n)_k$ denote the canonical base-$k$
representation of $n$, without leading zeros, and starting with the
most significant digit.  Then we say that $M$ generates the
sequence $(a_n)_{n \geq 0}$ if
$a_n = \tau(\delta(q_0, (n)_k))$ for all $n \geq 0$.

The prototypical example of a $k$-automatic sequence is the
Thue-Morse sequence ${\bf t} = t_0 t_1 t_2 \cdots = 01101001 \cdots$,
defined by the relations $t_0 = 0$ and
$t_{2n} = t_n$, $t_{2n+1} = 1-t_n$
for $n \geq 0$.    It is generated by the DFAO below in Figure~\ref{fig1}.

\begin{figure}[H]
\leavevmode
\def\epsfsize#1#2{1.0#1}
\centerline{\epsfbox{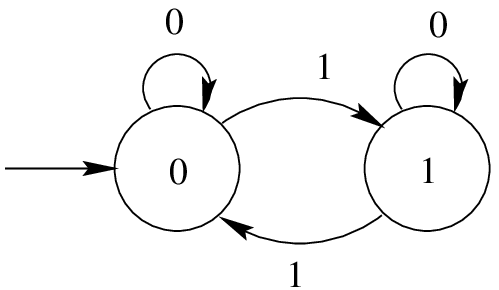}}
\protect\label{fig1}
\caption{A finite automaton generating the Thue-Morse sequence $\bf t$}
\end{figure}

A {\it factor} of the sequence $\bf x$ is a finite word of the
form $a_i \cdots a_j$.  A finite
word $w$ is said to be {\it bordered\/} if there
is some finite nonempty word $x \not= w$ that is both a prefix and
a suffix of $w$ \cite{Silberger:1971,Nielsen:1973,Ehrenfeucht&Silberger:1979,Silberger:1980}.  For example, the English word
{\tt ionization} is bordered, as it begins and ends with
{\tt ion}.  Otherwise $w$ is said to be {\it unbordered}.

Recently, there has been significant interest in the properties
of unbordered factors; see, for example,
\cite{Holub:2005,Harju&Nowotka:2007,Duval&Harju&Nowotka:2008,Holub&Nowotka:2010}.
In particular, Currie and Saari \cite{Currie&Saari:2009} studied 
the unbordered factors of the Thue-Morse word.

Currie and Saari \cite{Currie&Saari:2009} proved that if
$n \not\equiv \amod{1} {6}$, then
the Thue-Morse word has an unbordered factor of length $n$, but
left it open to decide for which lengths congruent to $1$ (mod $6$)
this property holds.  This was solved in
\cite{Goc&Henshall&Shallit:2012}, where the following characterization
is given:

\begin{theorem}
The Thue-Morse sequence $\bf t$
has an unbordered factor of length $n$ if and only
if $(n)_2 \not\in 1(01^*0)^* 10^*1$.  
\end{theorem}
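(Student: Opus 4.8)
The plan is to reduce the existence of an unbordered factor of length $n$ to a statement that is decidable by a finite automaton operating on the base-$2$ representation of $n$. The key observation is that whether $\mathbf{t}$ has an unbordered factor of a given length is expressible in an extended first-order logical theory $\mathrm{Th}(\mathbb{N}, +, <, V_2)$ (or equivalently in the logic of automatic sequences), where $V_2(n)$ denotes the largest power of $2$ dividing $n$. The central idea is that a factor $t_i t_{i+1} \cdots t_{i+n-1}$ is bordered precisely when there exists a border length $p$ with $1 \leq p \leq n-1$ such that the length-$p$ prefix equals the length-$p$ suffix, i.e. $t_{i+j} = t_{i+n-p+j}$ for all $0 \leq j < p$. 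Using the fact that equality of factors of $\mathbf{t}$ is itself an automatic predicate, I would encode ``$\mathbf{t}$ has an unbordered factor of length $n$'' as
\begin{equation*}
\exists i \; \forall p \, \big( 1 \leq p \leq n-1 \implies \exists j \, (0 \leq j < p \wedge t_{i+j} \neq t_{i+n-p+j}) \big).
\end{equation*}

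Next I would build the corresponding automaton explicitly. First I would construct the automaton recognizing the predicate $\mathrm{Eq}(i,i',\ell)$ asserting that $t_i \cdots t_{i+\ell-1} = t_{i'} \cdots t_{i'+\ell-1}$; this is well known to be $2$-automatic, reading $i$, $i'$, and $\ell$ in parallel in base $2$. From this I would assemble, through the standard automata-theoretic operations corresponding to the logical connectives---product construction for $\wedge$, subset construction for the $\exists$ quantifiers, and complementation for the $\forall$---a single DFA $A$ over $\Sigma_2$ that accepts exactly those $(n)_2$ for which an unbordered factor of length $n$ exists. The correctness of each step follows from the closure properties of regular languages, so the only content is verifying that the construction terminates with a manageable machine.

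Finally, I would confront the main obstacle, which is identifying the regular language accepted by $A$ in a human-readable closed form, namely the complement of $1(01^*0)^*10^*1$. The automata produced by the quantifier-elimination process tend to be large and opaque, so the hard part is not the existence of $A$ but extracting from it the clean regular expression in the theorem statement. I would carry this out by minimizing $A$, inspecting the resulting minimal DFA, and then exhibiting an explicit regular expression for the \emph{rejected} inputs; proving that the minimized automaton's rejecting language equals $1(01^*0)^*10^*1$ reduces to a finite check that the two automata (the minimized $A$ and a hand-built automaton for the regular expression) are equivalent. Because all of these steps---construction, minimization, and equivalence testing---are decidable and in fact were carried out mechanically, the proof is ultimately a verification that the decision procedure outputs the claimed automaton; the conceptual work lies entirely in setting up the correct logical formula and trusting the closure-property machinery to deliver a finite, checkable answer.
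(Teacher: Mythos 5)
Your proposal is correct and takes essentially the same approach as the paper's source for this result: the paper does not reprove the theorem but quotes it from the CIAA 2012 paper of Go\v{c}, Henshall, and Shallit, whose proof is precisely the decision procedure you describe---express ``the length-$n$ factor starting at position $i$ is unbordered'' in the extended first-order logic of the Thue--Morse sequence, compile the formula into a DFA over base-$2$ representations via the standard closure properties (product, projection/determinization, complementation), and mechanically verify that the rejected inputs are exactly $1(01^*0)^*10^*1$. This is the same machinery this paper invokes in Step~1 of Section~\ref{proof} to build its 23-state automaton, so your reconstruction matches the intended argument.
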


A harder problem is to come up with an expression for the number
of unbordered factors of $\bf t$.    
In \cite{Charlier&Rampersad&Shallit:2011}, the second author and
co-authors made the following conjecture:

\begin{conjecture}
Let $f(n)$ denote the number of unbordered factors of length $n$
in $\bf t$, the Thue-Morse sequence.  Then 
$f$ is given by $f(0) = 1$, $f(1) = 2$, $f(2) = 2$,
and the system of recurrences
\begin{eqnarray}
f(4n+1) & = & f(2n+1) \nonumber \\
f(8n+2) & =  &f(2n+1)-8f(4n) + f(4n+3) + 4f(8n) \nonumber \\
f(8n+3) & =  &2f(2n) - f(2n+1) + 5f(4n) + f(4n+2) - 3f(8n) \nonumber \\
f(8n+4) & =  &-4f(4n) + 2f(4n+2) + 2f(8n) \nonumber \\
f(8n+6) & =  &2f(2n)-f(2n+1) + f(4n) + f(4n+2) + f(4n+3) -f(8n) \nonumber \\
f(16n) & =  &-2f(4n) + 3f(8n)  \label{idl} \\
f(16n+7) & = & -2f(2n) + f(2n+1) -5f(4n) + f(4n+2) +3f(8n) \nonumber \\
f(16n+8) & = & -8f(4n) + 4f(4n+2) + 4f(8n)  \nonumber \\
f(16n+15) & = & -8f(4n) + 2f(4n+3) + 4f(8n) + f(8n+7) . \nonumber
\end{eqnarray}
for $n \geq 0$.
\label{unbtm}
\end{conjecture}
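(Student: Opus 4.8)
\section*{Proof proposal}

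The plan is to realize $f$ as a $2$-regular sequence and then verify the recurrences by a finite linear-algebra computation. First I would express the property of being a \emph{counted} factor as a first-order formula in the structure $\langle \mathbb{N}, +, <, V_2, \mathbf{t}\rangle$, writing $t[p]$ for the letter $t_p$ and $t[i..i+n-1]$ for the length-$n$ factor beginning at position $i$. The predicate $\mathrm{UNB}(i,n)$ asserting that this factor is unbordered is
\[
\forall \ell\, \bigl( 1 \le \ell \le n-1 \ \Rightarrow\ \exists j\, ( j < \ell \ \wedge\ t[i+j] \ne t[i+n-\ell+j]) \bigr),
\]
which says that for every proposed border length $\ell$ the length-$\ell$ prefix and length-$\ell$ suffix disagree somewhere. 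To count \emph{distinct} unbordered factors rather than occurrences, I would restrict to leftmost occurrences via $\mathrm{NEW}(i,n) := \forall i'\,(i' < i \Rightarrow \exists j\,(j<n \wedge t[i'+j]\ne t[i+j]))$. Then $\phi(i,n) := \mathrm{UNB}(i,n)\wedge\mathrm{NEW}(i,n)$ holds exactly once for each distinct unbordered factor of length $n$, at its first occurrence; since the subword complexity of $\mathbf t$ is finite for each $n$, the set of such $i$ is finite and $f(n) = |\{\, i : \phi(i,n)\,\}|$.

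Second, I would invoke the enumeration framework for automatic sequences: because $\mathbf t$ is $2$-automatic and $\phi$ is first-order, the counting function $f$ is $2$-regular, and an explicit linear representation $(\mathbf{u},\mu,\mathbf{v})$ with $f(n)=\mathbf{u}^{\top}\mu((n)_2)\,\mathbf{v}$ can be produced algorithmically. Concretely this amounts to building a DFA over a padded alphabet encoding $i$ and $n$ in parallel that recognizes $\phi$, projecting away the $i$-coordinate while recording multiplicities, and minimizing; the result is a weighted automaton, equivalently matrices $M_0,M_1$ with boundary vectors, so that $f(n)=\mathbf{u}^{\top} M_{d_1}\cdots M_{d_r}\,\mathbf{v}$ for $(n)_2=d_1\cdots d_r$. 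The $2$-kernel of $f$, the family of subsequences $n\mapsto f(2^a n+b)$, then spans a finite-dimensional space, in which I would fix an explicit basis.

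Third, each conjectured identity becomes a linear relation among kernel elements that is checked mechanically. For instance $f(16n)=-2f(4n)+3f(8n)$ is, writing $\mathbf{s}(w)=\mathbf{u}^{\top}\mu(w)$ for the row vector obtained after reading a base-$2$ word $w$, the statement that $\mathbf{s}(w)\,(M_0^4+2M_0^2-3M_0^3)\,\mathbf{v} = 0$ for every $w$, since appending $a$ trailing zeros to $(n)_2$ corresponds to right-multiplication by $M_0^a$. As the vectors $\mathbf{s}(w)$ lie in a fixed finite-dimensional space, this reduces to a single vector identity on a spanning set, equivalently to confirming the recurrence on all $n$ whose representation has length below an explicit bound. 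The remaining eight recurrences are handled identically after rewriting each $f(2^a n + c)$ through $M_0,M_1$ acting on $\mathbf{v}$, and one finishes by checking the initial values $f(0)=1$, $f(1)=2$, $f(2)=2$ directly.

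The main obstacle is getting the formula $\phi$ exactly right and keeping the resulting automaton small enough to compute with: the unboundedly quantified border length $\ell$ and the leftmost-occurrence quantifier over $i'$ both introduce alternation that can blow up the state count during determinization and projection. Once a correct minimized representation is in hand, verifying the recurrences is purely routine linear algebra, so the real work lies in the construction and certification of that representation rather than in the recurrences themselves.
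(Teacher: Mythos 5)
Your proposal follows the paper's proof essentially step for step: the paper likewise builds a $23$-state automaton accepting the pairs $(n,i)_2$ such that a ``novel'' unbordered factor of length $n$ begins at position $i$ (exactly your $\mathrm{UNB}\wedge\mathrm{NEW}$ predicate), extracts a linear representation $v, M_0, M_1, w$ with $f(n)=vM_{a_1}\cdots M_{a_i}w$ via the Charlier--Rampersad--Shallit enumeration framework, and then verifies each recurrence as a finite linear-algebra identity, checking $n=0,1$ separately. The only divergence is an implementation detail of the verification step: the paper treats the entries of $vM$ as indeterminates (zeroing out structurally impossible entries found from the transitive closure of $M_0+M_1$), while you verify the identity on a spanning set of actual kernel vectors; these are equivalent, routine checks.
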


This conjecture was obtained by computing a large number of values
of $f$ and then looking for possible
linear relations among subsequences
of the form $(f(2^i n + j))_{n \geq 0}$.

This system suffices to calculate $f$ efficiently,
in $O(\log n)$ arithmetic steps.

We now summarize the rest of the paper.  In Section~\ref{proof},
we prove Conjecture~\ref{unbtm}.  In Section~\ref{kreg}, we discuss
how to obtain relations like those above for a given $k$-regular
sequence.  In Section~\ref{growth} we discuss the growth rate of
$f$ in detail.  Finally, in Section~\ref{other}, we give examples
of other sequences with interesting numbers of unbordered factors.

\section{Proof of the conjecture}
\label{proof}

We now outline our computational proof of
Conjecture~\ref{unbtm}.

First, we
need a little notation.  We extend the notion of canonical base-$k$
representation of a single non-negative integer to tuples of such
integers.  For example, by $(m,n)_k$ we mean the unique word over
the alphabet $\Sigma_k \times \Sigma_k$ such that the projection
$\pi_1$ onto the first coordinate gives the base-$k$ representation
of $m$, and the projection $\pi_2$ onto the second co-ordinate gives
the base-$k$ representation of $n$, where the shorter representation is
padded with leading $0$'s, if necessary, so that the representations
have the same length.  For example,
$(43,17)_2 = [1,0] [0,1] [1,0] [0,1] [1,0] [1,1]$.

\begin{proof}

Step 1:  Using the ideas in \cite{Goc&Henshall&Shallit:2012}, we created
an automaton $A$ of 23 states that accepts the
language $L$ of all words $(n,i)_2$ such that there is
a ``novel'' unbordered factor of length $n$ in $\bf t$ beginning
at position $i$. Here ``novel'' means
that this factor does not previously appear in any position to the left.
Thus, the number of such words with first component equal to $(n)_2$
equals $f(n)$, the number of unbordered factors of $\bf t$ of length $n$.
This automaton is illustrated below in Figure~\ref{aut} (rotated to fit
the figure more clearly).

\bigskip

Step 2:  Using the ideas in \cite{Charlier&Rampersad&Shallit:2011},
we now know that $f$ is a $2$-regular sequence,
with a ``linear representation'' that can be deduced from the structure
of $A$.  This gives matrices $M_0, M_1$ of dimension $23$ and vectors
$v, w$ such that
$f(n) = v M_{a_1} \cdots M_{a_i} w$
where $a_1 \cdots a_i$ is the base-$2$
representation of $n$, written with the most significant digit first.
They are given below.

$$M_0 = {\scriptsize \left[ \begin{array}{ccccccccccccccccccccccc}
1&0&0&0&0&0&0&0&1&0&0&0&0&0&0&0&0&0&0&0&0&0&0 \\
0&0&1&0&0&0&0&0&0&1&0&0&0&0&0&0&0&0&0&0&0&0&0 \\
0&0&1&0&0&0&0&0&0&0&0&1&0&0&0&0&0&0&0&0&0&0&0 \\
0&0&0&0&0&1&0&0&0&0&0&1&0&0&0&0&0&0&0&0&0&0&0 \\
0&0&0&0&0&0&0&0&0&0&1&0&1&0&0&0&0&0&0&0&0&0&0 \\
0&0&0&1&0&0&0&0&0&0&0&1&0&0&0&0&0&0&0&0&0&0&0 \\
0&0&0&0&0&0&0&1&0&0&0&0&0&0&0&0&1&0&0&0&0&0&0 \\
0&0&0&0&0&0&0&0&0&0&0&0&0&0&0&1&0&0&1&0&0&0&0 \\
0&0&0&0&0&0&0&0&0&0&0&0&1&0&0&0&0&0&0&0&0&1&0 \\
0&0&0&0&0&1&0&0&0&0&0&1&0&0&0&0&0&0&0&0&0&0&0 \\
0&0&0&0&0&0&0&0&0&0&0&0&1&1&0&0&0&0&0&0&0&0&0 \\
0&0&1&0&0&0&0&0&0&0&0&1&0&0&0&0&0&0&0&0&0&0&0 \\
0&0&0&0&0&0&0&0&0&0&0&0&0&0&0&0&0&0&0&0&0&0&0 \\
0&0&0&0&0&0&0&0&0&0&1&0&1&0&0&0&0&0&0&0&0&0&0 \\
0&0&0&1&0&0&0&0&0&0&0&1&0&0&0&0&0&0&0&0&0&0&0 \\
0&0&0&0&0&0&0&0&0&0&0&0&0&0&0&0&0&0&0&0&0&0&0 \\
0&0&0&0&0&0&0&0&0&0&1&0&0&0&0&0&0&1&0&0&0&0&0 \\
0&0&0&0&0&0&0&0&0&0&0&0&0&0&0&1&0&0&1&0&0&0&0 \\
0&0&0&0&0&0&0&0&0&0&0&0&1&0&0&0&0&1&0&0&0&0&0 \\
0&0&0&0&0&0&0&0&0&0&0&0&1&0&0&0&0&0&0&0&1&0&0 \\
0&0&0&0&0&0&0&0&0&0&0&0&1&0&0&0&0&0&0&1&0&0&0 \\
0&0&0&0&0&0&0&0&0&0&0&0&0&0&0&0&0&0&0&0&0&0&0 \\
0&0&0&0&0&0&0&0&0&0&0&0&0&0&0&0&0&1&1&0&0&0&0 
\end{array}
\right]
}
$$

\begin{figure}[H]
\begin{center}
\includegraphics[width=8in,angle=90]{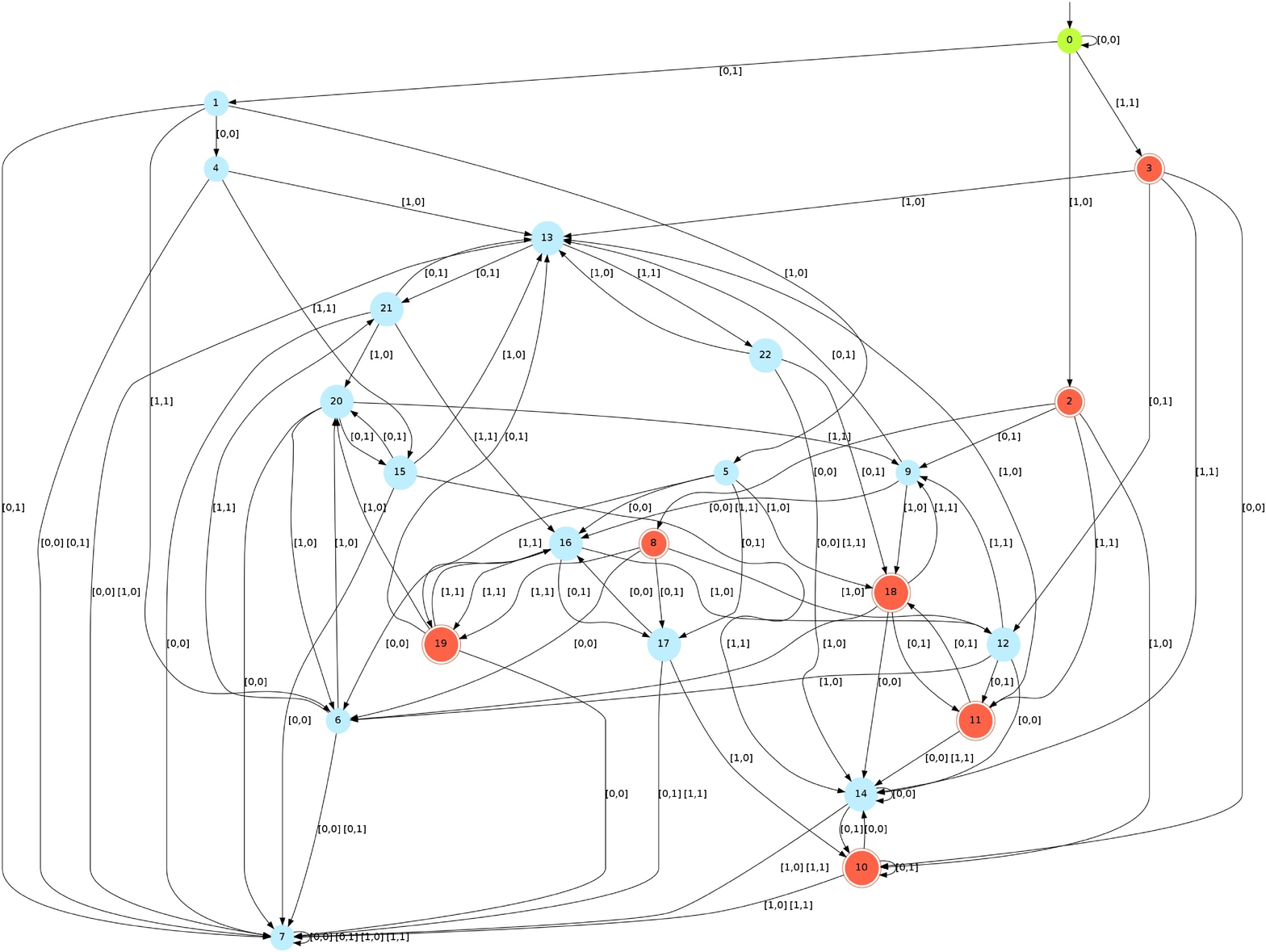}
\end{center}
\caption{Automaton accepting $(n,i)_2$ such that
there is a novel unbordered factor of length $n$ at position $i$ of
$\bf t$}
\label{aut}
\end{figure}

$$M_1 = {\scriptsize \left[ \begin{array}{ccccccccccccccccccccccc}
0&1&0&0&0&0&1&0&0&0&0&0&0&0&0&0&0&0&0&0&0&0&0 \\
0&0&0&0&0&0&0&0&0&0&1&1&0&0&0&0&0&0&0&0&0&0&0 \\
0&0&0&0&0&0&0&0&0&0&0&0&0&0&0&0&0&0&0&0&0&0&0 \\
0&0&0&0&0&0&0&0&0&0&0&0&0&0&0&1&1&0&0&0&0&0&0 \\
0&0&0&0&0&0&0&0&0&0&0&0&0&0&0&0&0&1&0&1&0&0&0 \\
0&0&0&0&0&0&0&0&0&0&1&1&0&0&0&0&0&0&0&0&0&0&0 \\
0&0&1&0&0&1&0&0&0&0&0&0&0&0&0&0&0&0&0&0&0&0&0 \\
0&0&0&0&1&0&0&0&0&1&0&0&0&0&0&0&0&0&0&0&0&0&0 \\
0&0&0&0&0&0&0&0&0&0&0&0&0&0&0&1&0&0&0&0&0&0&1 \\
0&0&0&0&0&0&0&0&0&0&0&0&0&0&0&1&1&0&0&0&0&0&0 \\
0&0&0&0&0&0&0&0&0&0&0&0&1&0&1&0&0&0&0&0&0&0&0 \\
0&0&0&0&0&0&0&0&0&0&0&0&0&0&0&0&0&0&0&0&0&0&0 \\
0&0&0&0&0&0&0&0&0&0&0&0&0&0&0&0&0&0&0&0&0&0&0 \\
0&0&0&0&0&0&0&0&0&0&0&0&0&0&0&0&0&1&0&1&0&0&0 \\
0&0&0&0&0&0&0&0&0&0&1&1&0&0&0&0&0&0&0&0&0&0&0 \\
0&0&0&0&0&0&0&0&0&0&0&0&0&1&0&0&0&0&0&1&0&0&0 \\
0&0&0&1&0&0&0&0&0&0&0&0&0&0&0&0&0&1&0&0&0&0&0 \\
0&0&0&0&1&0&0&0&0&1&0&0&0&0&0&0&0&0&0&0&0&0&0 \\
0&0&1&0&0&0&0&0&0&0&0&0&1&0&0&0&0&0&0&0&0&0&0 \\
0&0&0&0&0&0&0&0&0&0&0&0&0&0&0&1&1&0&0&0&0&0&0 \\
0&0&0&0&0&0&0&0&0&0&1&1&0&0&0&0&0&0&0&0&0&0&0 \\
0&0&0&0&0&0&0&0&0&0&1&0&0&0&0&0&0&0&0&0&1&0&0 \\
0&0&0&1&1&0&0&0&0&0&0&0&0&0&0&0&0&0&0&0&0&0&0 
\end{array}
\right] . 
}
$$

\begin{eqnarray*}
v &=& [1,0,0,0,0,0,0,0,1,0,0,0,1,0,0,0,0,0,0,0,0,1,0] \\
w &=& [0,1,1,1,1,1,1,1,0,0,0,0,0,0,0,0,0,0,0,0,0,0,0]^T
\end{eqnarray*}

\bigskip

Step 3:  Now each of the identities in (\ref{idl}) corresponds to
a certain identity in matrices.  For example, the identity
$f(16n)  =  -2f(4n) + 3f(8n)$ can be written as
\begin{equation}
v M M_0 M_0 M_0 M_0 w = -2 v M M_0 M_0 w + 3 v M M_0 M_0 M_0 w,
\label{id16}
\end{equation}
where $M$ is the matrix product corresponding to the base-$2$
expansion of $n$.  More generally, we can think of $M$ as
some arbitrary product of the matrices $M_0$ and $M_1$,
starting with at least one $M_1$; this corresponds to an arbitrary
$n \geq 1$.
We can think of $M$ as a matrix of indeterminates.
Then (\ref{id16}) represents an assertion about the entries of
$M$ which can be verified.  Of course, the entries of $M$ are not
completely arbitrary, since they come about as $M_1$ times some product
of $M_0$ and $M_1$.   We can compute the (positive) transitive closure
of $M_0 + M_1$ and then multiply on the left by $M_1$; the entries that
have $0$'s will be $0$ in any product of $M_1$ times a product of
the matrices $M_0$ and $M_1$.  Thus we can replace the corresponding
indeterminates by $0$, which makes verifying (\ref{id16}) easier.

Another approach, which is even simpler, is to consider $vM$ in place
of $M$.  This reduces the number of entries it is required to check
from $d^2$ to $d$, where $d$ is the dimension of the matrices.

\bigskip

Step 4:  Finally, we have to verify the identies for $n = 0$ and $n =1 $,
which is easy.

\bigskip

We carried out this computation in Maple
for the matrices $M_0$ and $M_1$ corresponding
to $A$, which completes the proof.  The Maple program can be downloaded
from \newline
\centerline{\url{http://www.cs.uwaterloo.ca/~shallit/papers.html} . }
\end{proof}

\section{Determining the relations}
\label{kreg}

The verification method of the previous section can be extended to a method
to mechanically {\it find\/} the relations for {\it any} given $k$-regular
sequence $g$
(instead of guessing them and verifying them), given the linear
representation of $g$.

Suppose we are given the linear representation of a $k$-regular sequence $g$,
that is, vectors $v, w$ and matrices $M_0, M_1, \ldots, M_{k-1}$ such
that $g(n) = v M_{a_1} M_{a_2} \cdots M_{a_j} w$, where
$a_1 a_2 \cdots a_j = (n)_k$.


Now let $M$ be arbitrary and consider $vM$ as a vector with variable
entries, say $[a_1, a_2,\ldots, a_d]$.  Successively compute 
$vM M_y w$ for words $y$ of length $0, 1, 2, \ldots$ over
$\Sigma_k = \lbrace 0, 1, \ldots, k-1 \rbrace$; this will give an
expression in terms of the variables $a_1, \ldots, a_d$.   After at
most $d+1$ such relations, we find an expression for
$v M M_y w$ for some $y$ as a linear combination of previously
computed expressions.  When this happens, you no longer need to consider
any expression having $y$ as a suffix.  Eventually the procedure halts,
and this corresponds to a system of
equations like that in (\ref{unbtm}).  

Consider the following example.  Let $k =2$,
$v = [6,1]$, $w = [2,4]^T$, and
\begin{eqnarray*} 
M_0 &=& \left[ \begin{array}{cc}
	-3 & 1 \\
	1 & 4 
	\end{array} \right] \\
M_1 &=& \left[ \begin{array}{cc}
	0 & 2 \\
	-3 & 1 
	\end{array} \right]
\end{eqnarray*}

Suppose $M$ is some product of $M_0$ and $M_1$, and 
suppose $vM = [a, b]$.

We find
\begin{eqnarray*}
vMw &=& 2a + 4b \\
vMM_0 w &=& -2a + 18b \\
v M M_1 w &=& -8a -2b \\
v M M_0 M_0 w &=& 24a + 70b \\
v M M_1 M_0 w &=& 36a + 24  b
\end{eqnarray*}

and, solving the linear systems, we get
\begin{eqnarray*}
v M M_1 w &=& {{35} \over {11}} v Mw  - {{9} \over {11}} v M_0 w \\
v M M_0 M_0 w &=& 13 v Mw + v M_0 w \\
v M M_1 M_0 w &=& {{174} \over {11}} vMw - {{24} \over {11}} v M_0 w  .
\end{eqnarray*}

This gives us
\begin{eqnarray*}
g(2n+1) &=& {{35} \over 11} g(n) + {9 \over {11}} g(2n) \\
g(4n) &=& 13 g(n) + g(2n) \\
g(4n+2) &=& {{174} \over {11}} g(n) - {{24} \over {11}} g(2n) 
\end{eqnarray*}
for $n \geq 1$.


\section{The growth rate of $f(n)$}
\label{growth}

We now return to $f(n)$, the number of unbordered factors of
$\bf t$ of length $n$.
Here is a brief table of $f(n)$:

\begin{table}[H]
\begin{center}
\begin{tabular}{|c|c|c|c|c|c|c|c|c|c|c|c|c|c|c|c|c|c|c|c|c|c|c|c|c|c|c|c|c|c|c|c
|c|c|c|c|c|c|c|c|c|c|c|c|c|c|c|c|c|}
\hline
$n$   & 0 &  1 & 2 & 3 & 4 & 5 & 6 & 7 & 8 & 9 & 10 & 11 & 12 & 13 & 14 & 15 
 & 16 & 17 & 18 & 19 & 20 & 21 & 22 & 23 & 24 & 25 & 26 & 27 & 28 & 29 & 30 & 31 \\
\hline
$f(n)$& 1 &  2 & 2 & 4 & 2 & 4 & 6 & 0 & 4 & 4 & 4 & 4 & 12 & 0 & 4 & 4 
 & 8 & 4 & 8 & 0 & 8 & 4 & 4 & 8 & 24 & 0 & 4 & 4 & 8 & 4 & 8 & 4 \\
\hline
\end{tabular}
\end{center}
\end{table}

Kalle Saari (personal communication) asked about the growth rate of
$f(n)$.    The following results characterizes it.

\begin{theorem}
We have $f(n) \leq n$ for $n \geq 4$. Furthermore,
$f(n) = n$ infinitely often.    Thus,
$\limsup_{n \geq 1} f(n)/n = 1$.
\end{theorem}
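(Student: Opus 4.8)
The plan is to split the theorem into its two halves and attack them by different means: an explicit infinite family realizing $f(n)=n$ (which yields $\limsup f(n)/n \ge 1$ and the ``infinitely often'' claim), and a strong induction on the recurrences for the upper bound $f(n)\le n$ (which yields $\limsup f(n)/n\le 1$). For the equality family I would take $n = 3\cdot 2^k$, the integers with base-$2$ representation $11\,0^k$, and prove $f(3\cdot 2^k)=3\cdot 2^k$ for $k\ge 1$ by induction on $k$. The small cases $f(6)=6$, $f(12)=12$, $f(24)=24$ come from the table (equivalently from $f(8n+4)$ and $f(16n+8)$ at $n=1$, and $f(16\cdot 3)=-2f(12)+3f(24)$). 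For $k\ge 4$ write $3\cdot 2^k=16m$ with $m=3\cdot 2^{k-4}$ and apply $f(16m)=-2f(4m)+3f(8m)$; since $4m=3\cdot 2^{k-2}$ and $8m=3\cdot 2^{k-1}$, the induction hypothesis gives $f(16m)=-2\cdot 3\cdot 2^{k-2}+3\cdot 3\cdot 2^{k-1}=3\cdot 2^k$. This settles the lower half.

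For $f(n)\le n$ I would induct strongly on $n$, reducing each $n\ge 4$ through whichever identity in (\ref{idl}) matches its residue. One first checks that the indices $4n+1$; $8n+2,8n+3,8n+4,8n+6$; and $16n,16n+7,16n+8,16n+15$ exhaust all residues modulo $16$, and that every argument appearing on a right-hand side is strictly smaller than $n$ once $n$ is past a fixed initial segment. That initial segment must be verified directly from the computed values: this is unavoidable, since $f(3)=4>3$ (so the hypothesis $n\ge 4$ is genuine) and these small values feed the low-index right-hand sides.

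The main obstacle is the large \emph{negative} coefficients in (\ref{idl}), for instance the $-8f(4n)$ occurring in $f(8n+2)$, $f(16n+8)$, and $f(16n+15)$. A term-by-term bound fails badly: estimating $f(8n)\le 8n$ and discarding $-8f(4n)\le 0$ overshoots $n$ by a wide margin, so the negative terms cannot be thrown away and must instead be controlled by matching \emph{lower} bounds on the subsequences $f(2n),f(2n+1),f(4n),f(8n)$. The fix I would pursue is to strengthen the induction hypothesis to a finite system of two-sided linear inequalities, one bundle for each residue class of $n$ modulo a fixed power of $2$, arranged so that on each right-hand side the positive and negative contributions combine to at most the target $n$. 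I would guess this system from the table, exploiting exact structural relations hidden in the data --- for example $f(16n+8)=2f(8n+4)$, which follows by comparing the right-hand sides of $f(16n+8)$ and $f(8n+4)$ and lets several classes be treated by a single doubling bound. With the right invariant in hand, the induction step reduces to one algebraic inequality per residue class, a finite check that can be done by hand or, exactly in the spirit of Section~\ref{proof}, verified mechanically from the linear representation $(v,w,M_0,M_1)$. Pinning down the strengthened invariant that simultaneously absorbs every $-8f(4n)$ is the crux; everything after that is routine.

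Combining the two halves gives the theorem: $f(n)\le n$ for $n\ge 4$ forces $\limsup_{n\ge 1} f(n)/n\le 1$, while $f(3\cdot 2^k)=3\cdot 2^k$ shows $f(n)/n=1$ for infinitely many $n$, so $\limsup_{n\ge 1} f(n)/n = 1$.
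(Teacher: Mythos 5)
Your lower-bound half is correct and complete: taking $n = 3\cdot 2^k$, using $f(16n) = -2f(4n)+3f(8n)$ from (\ref{idl}) (already proved in Section~\ref{proof}) together with the table values $f(6)=6$, $f(12)=12$, $f(24)=24$, the induction closes exactly as you compute. The paper reaches the same family $3\cdot 2^i$ by a slightly different route (it first verifies the cleaner identity $f(4n)=2f(2n)$ for $n \geq 2$ and doubles), but your version is equally valid.

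The upper-bound half, however, has a genuine gap, and you name it yourself: the ``strengthened invariant that simultaneously absorbs every $-8f(4n)$'' is never produced, and without it the induction does not close. The missing idea is that you should not fight the system (\ref{idl}) at all. Since $f$ has a linear representation, you are free to verify \emph{additional, better-behaved} identities by exactly the same mechanical matrix computation used in Section~\ref{proof}; this is what the paper does. It establishes the system
$f(4n) = 2f(2n)$ $(n\geq 2)$,
$f(4n+1) = f(2n+1)$,
$f(8n+2) = f(2n+1)+f(4n+3)$,
$f(8n+3) = -f(2n+1)+f(4n+2)$,
$f(8n+6) = -f(2n+1)+f(4n+2)+f(4n+3)$,
$f(8n+7) = 2f(2n+1)+f(4n+3)$,
which covers every residue class modulo $8$ and in which every negative coefficient is $-1$. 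Now the induction $f(n)\leq n$ is a term-by-term estimate: negative terms are simply discarded (as $f \geq 0$), and, for instance, $f(8m+2) \leq (2m+1)+(4m+3) = 6m+4 \leq 8m+2$ for $m \geq 1$, with the few small cases checked from the table. No two-sided invariant, and no lower bounds on subsequences, are needed. So the crux you flagged is resolved not by strengthening the induction hypothesis but by changing the recurrence system --- precisely the flexibility that Section~\ref{kreg} of the paper is about.
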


\begin{proof}

We start by verifying the following relations:

\begin{eqnarray}
f(4n) &=& 2 f(2n), \quad\quad (n \geq 2)  \label{eq2} \\
f(4n+1) &=& f(2n+1), \quad\quad (n \geq 0) \label{eq3} \\
f(8n+2) &=& f(2n+1) + f(4n+3), \quad\quad (n \geq 1) \label{eq4} \\
f(8n+3) &=& -f(2n+1) + f(4n+2) \quad\quad (n \geq 2) \label{eq5} \\
f(8n+6) &=& -f(2n+1) + f(4n+2) + f(4n+3) \quad\quad (n \geq 2) \label{eq6}\\
f(8n+7) &=& 2f(2n+1) + f(4n+3) \quad\quad (n \geq 3) \label{eq7}
\end{eqnarray}

These can be verified in exactly the
same way that we verified the system (\ref{unbtm}) earlier.

We now verify, by induction on $n$, that $f(n) \leq n$ for $n \geq 4$.
The base case is $n = 4$, and $f(4) = 2$.  Now assume $n \geq 5$.
Otherwise,
\begin{itemize}
\item If $n \equiv \amod{0} {4}$, say $n = 4m$ and $m \geq 2$.
Then $f(4m) = 2 f(2m) \leq 2 \cdot 2m \leq 4m$ by
(\ref{eq2}) and induction.
\medskip

\item If $n \equiv \amod{1} {4}$, say $n = 4m+1$ for $m \geq 1$,
then $f(4m+1) = f(2m+1)$ by (\ref{eq3}).   But $f(2m+1) \leq 2m+1$
by induction for $m \geq 2$.  The case $m = 1$ corresponds
to $f(5) = 4 \leq 5$.
\medskip

\item If $n \equiv \amod{2} {8}$, say $n = 8m+2$, then for $m \geq 2$
we have $f(8m+2) = f(2m+1) + f(4m+3) \leq 6m+4$ by induction, which is less than
$8m+2$.  If $m = 1$, then $f(10) = 4 < 10$.
\medskip

\item If $n \equiv \amod{3} {8}$, say $n = 8m+3$ for $m \geq 1$, then
$f(8m+3) = -f(2m+1) + f(4m+2) \leq f(4m+2) \leq 4m+2$ by induction.
\medskip

\item If $n \equiv \amod{6} {8}$, say $n = 8m+6$, then
$f(8m+6) = -f(2m+1) + f(4m+2) + f(4m+3) \leq f(4m+2) + f(4m+3) \leq 8m+5$
by induction,
provided $m \geq 2$.  For $m = 0$ we have $f(6) = 6$ and for
$m = 1$ we have $f(14) = 4$.
\medskip

\item If $n \equiv \amod{7} {8}$, say $n = 8m+7$, then
$f(8m+7) = 2f(2m+1) + f(4m+3) \leq 2 (2m+1) + 4m+3 = 8m + 5$ for
$m \geq 3$, by induction.
The cases $m = 0, 1, 2$ can be verified by inspection.
\end{itemize}

This completes the proof that $f(n) \leq n$.

It remains to see that $f(n) = n$ infinitely often.  We do this
by showing that
$f(n) = n$ for $n$ of the form $3 \cdot 2^i$, $i \geq 1$.  
Let us prove this by induction on $i$.  It is true for $i = 1$ since
$f(6) = 6$.    Otherwise $i \geq 2$,
and using (\ref{eq2}) we have $f(3 \cdot 2^{i+1}) = 
2f(3 \cdot 2^i) = 2 \cdot 3 \cdot 2^i = 3 \cdot 2^{i+1}$ by induction.
This also implies the claim $\limsup_{n \geq 1} f(n)/n = 1$.

\end{proof}

\section{Unbordered factors of other sequences}
\label{other}

We can carry out similar computations for other famous sequences.
In some cases the automata and the corresponding matrices are very 
large, which renders the computations time-consuming and the asymptotic
behavior less transparent.    We report on some of these computations,
omitting the details.

\begin{theorem}
Let ${\bf r} = r_0 r_1 r_2 \cdots = 00010010 \cdots $
denote the Rudin-Shapiro sequence,
defined by $r_n = $ the number of occurrences, taken modulo $2$, of
`11' in the binary expansion of $n$.  
Let $f_{\bf r} (n)$ denote the number of unbordered factors
of length $n$ in $\bf r$.
Then
$f_{\bf r}(n) \leq {{21}\over 8} n$ for all 
$n \geq 1$.  
Furthermore if $n = 2^i + 1$, then
$f(n) = 21 \cdot 2^{i-3}$ for $i \geq 4$.
\end{theorem}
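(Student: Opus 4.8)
The plan is to follow the same computational route used above for the Thue-Morse sequence. First I would construct, using the method of \cite{Goc&Henshall&Shallit:2012}, a deterministic finite automaton $A_{\bf r}$ that accepts exactly the pairs $(n,i)_2$ for which there is a novel unbordered factor of length $n$ beginning at position $i$ of $\bf r$, where ``novel'' again means that the factor does not occur at any smaller position. As in Step~2 of the proof of Conjecture~\ref{unbtm}, the number of accepted pairs whose first component is $(n)_2$ is then exactly $f_{\bf r}(n)$, and reading off the structure of $A_{\bf r}$ yields a linear representation: vectors $v, w$ and $0/1$ matrices $M_0, M_1$ with $f_{\bf r}(n) = v M_{a_1} \cdots M_{a_j} w$ for $(n)_2 = a_1 \cdots a_j$. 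In particular $f_{\bf r}$ is $2$-regular. The only genuine difference from the Thue-Morse case is scale: the Rudin-Shapiro automaton, and hence the matrices, are considerably larger, so this first step is where the computational cost lives.

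Next I would apply the mechanical procedure of Section~\ref{kreg} to the linear representation of $f_{\bf r}$, computing $v M M_y w$ for words $y$ of increasing length until the expressions become linearly dependent, so as to extract a closed system of recurrences for the subsequences $f_{\bf r}(2^j n + r)$ analogous to (\ref{eq2})--(\ref{eq7}). Because that procedure solves exact linear systems in the free entries of $vM$, the relations it returns hold for all sufficiently large $n$ automatically, so no separate matrix-identity check is needed beyond pinning down the ranges of validity. With such a system in hand, the upper bound $f_{\bf r}(n) \le \frac{21}{8} n$ would be proved by strong induction on $n$, exactly as in the Thue-Morse argument: for each residue class modulo the relevant power of $2$, substitute the corresponding recurrence, bound each term on the right by the inductive hypothesis, and verify that the resulting linear combination stays below $\frac{21}{8} n$. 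The finitely many small values of $n$ for which a recurrence is not yet valid, or whose reduced indices fall below the induction threshold, would be checked directly against the computed table of $f_{\bf r}$.

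For the exact evaluation along $n = 2^i+1$, observe that the claim $f_{\bf r}(2^i+1) = 21 \cdot 2^{i-3}$ is equivalent to the doubling relation $f_{\bf r}(2^{i}+1) = 2 f_{\bf r}(2^{i-1}+1)$ for $i \ge 5$, together with the base value $f_{\bf r}(17) = 42$. Since $2^i + 1 \equiv 1 \pmod 4$ for $i \ge 2$, I would isolate from the recurrence system the relation governing $f_{\bf r}(4m+1)$ and specialize it to $m = 2^{i-2}$, so that $4m+1 = 2^i+1$ and $2m+1 = 2^{i-1}+1$; the goal is to show that, restricted to this subsequence, the remaining ``lower-order'' terms (which are evaluations at powers of $2$) either vanish or combine to contribute exactly a factor of two over $f_{\bf r}(2^{i-1}+1)$. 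The induction on $i$ then runs immediately, with the base case $f_{\bf r}(17) = 42 = 21 \cdot 2$ read off from the automaton. Note that this subsequence is asymptotically extremal, since $f_{\bf r}(2^i+1)/(2^i+1) \to 21/8$, so it simultaneously certifies that the constant $\frac{21}{8}$ in the upper bound cannot be improved.

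The main obstacle I anticipate is combinatorial rather than conceptual: the Rudin-Shapiro automaton is large enough that producing a recurrence system which is simultaneously correct, closes after finitely many steps, and is ``clean'' enough to drive a \emph{tight} induction may require some experimentation in choosing which modular residue classes to split on. In particular, guaranteeing the sharp constant $\frac{21}{8}$, rather than a weaker linear bound, demands that the inductive estimates be tight precisely in the class containing $n = 2^i + 1$, so care is needed in grouping the terms and in pinning down the correct finite set of exceptional small values.
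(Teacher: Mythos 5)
Your proposal is essentially the paper's own approach: the paper gives no proof of this theorem beyond reporting that it comes from ``similar computations'' to the Thue--Morse case with the details omitted, and your outline---the automaton accepting $(n,i)_2$ for novel unbordered factors of $\bf r$, the linear representation read off from it, a mechanically derived recurrence system as in Section~\ref{kreg}, induction over residue classes for the $\frac{21}{8}n$ bound, and a doubling relation with base case $f_{\bf r}(17)=42$ for the subsequence $n = 2^i+1$---is precisely that computation made explicit. One technical point to watch: recurrences obtained by treating $vM$ as a vector of fully free indeterminates automatically hold for \emph{all} $n \geq 1$, so the clean threshold-restricted relations you would actually need (analogues of (\ref{eq2})--(\ref{eq7}); note e.g.\ that $f(4n)=2f(2n)$ fails at $n=1$ for Thue--Morse) cannot come out of that procedure alone, but require additionally constraining the indeterminates by the zero pattern of $v M_1 (M_0+M_1)^+$ as in Step~3 of Section~\ref{proof}, with the finitely many small $n$ checked by hand---which is the refinement you implicitly allow when you speak of ``pinning down the ranges of validity.''
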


\begin{theorem}
Let ${\bf p} = p_0 p_1 p_2 \cdots = 0100 \cdots$ be the
so-called ``period-doubling'' sequence, defined by 
$$
p_n =  \begin{cases}
1, & \text{ if $t_n = t_{n+1}$;} \\
0, & \text{ otherwise,}
\end{cases}
$$
where $t_0 t_1 t_2 \cdots$ is the Thue-Morse word $\bf t$.
Note that $\bf p$ is the fixed point of the morphism
$0 \rightarrow 01$ and
$ 1 \rightarrow 00$.
Then $f_{\bf p} (n)$, the number of unbordered factors of $\bf p$ of
length $n$, is equal to $2$ for all $n \geq 1$.
\end{theorem}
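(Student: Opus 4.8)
The plan is to follow exactly the pipeline used to prove Conjecture~\ref{unbtm}, specialized to $\mathbf{p}$. Since $\mathbf{p}$ is the fixed point of the $2$-uniform morphism $0 \to 01$, $1 \to 00$, it is $2$-automatic by a theorem of Cobham, so every tool invoked in Section~\ref{proof} applies verbatim. First I would use the decision procedure of \cite{Goc&Henshall&Shallit:2012} to construct an automaton $A_{\mathbf p}$ accepting the set of all $(n,i)_2$ for which there is a \emph{novel} unbordered factor of length $n$ beginning at position $i$ of $\mathbf{p}$; by construction the number of accepted words whose first component is $(n)_2$ equals $f_{\mathbf p}(n)$. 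From the transition structure of $A_{\mathbf p}$ the method of \cite{Charlier&Rampersad&Shallit:2011} yields a linear representation $(v, M_0, M_1, w)$ with $f_{\mathbf p}(n) = v M_{a_1} \cdots M_{a_j} w$ for $(n)_2 = a_1 \cdots a_j$, exhibiting $f_{\mathbf p}$ as a $2$-regular sequence.

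It then remains to show that this $2$-regular sequence is identically $2$ for $n \ge 1$. Rather than hunt for the minimal system of relations, I would simply verify the two identities
\[
f_{\mathbf p}(2n) = f_{\mathbf p}(n), \qquad f_{\mathbf p}(2n+1) = f_{\mathbf p}(n)
\]
using the matrix-identity technique of Steps~3--4 of Section~\ref{proof}: each identity asserts that $v M M_0 w = v M w$, respectively $v M M_1 w = v M w$, for every product $M$ beginning with $M_1$, and this is checked on a spanning set of the (finite-dimensional) space of reachable vectors $vM$, exactly as in~(\ref{id16}). Together with the base case $f_{\mathbf p}(1) = 2$ (and a finite inspection of any small values for which the matrix identities carry a side condition), a trivial induction on the number of binary digits of $n$ --- strip the least significant digit to pass from $n$ to $\lfloor n/2 \rfloor$ --- forces $f_{\mathbf p}(n) = f_{\mathbf p}(1) = 2$ for all $n \ge 1$.

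The main obstacle is entirely in the first paragraph: getting the ``novel unbordered factor'' predicate correct for $\mathbf p$ and computing (and minimizing) $A_{\mathbf p}$ together with its linear representation; once the matrices are in hand, the constancy is a routine finite linear-algebra verification. As a structural sanity check I would note that the factor set of $\mathbf p$ is closed under reversal, so unbordered factors occur in reversal pairs, and indeed the two unbordered factors of each length $n \ge 2$ appear to be mirror images of one another --- for instance $00101$ and $10100$ at length $5$. This also indicates why a naive combinatorial argument is delicate: the pair is \emph{not} simply $0^{n-1}1$ and $1 0^{n-1}$ (those fail already at length $5$, since $\mathbf p$ contains no factor $0^4$), so identifying the two surviving factors by hand would require a careful analysis of the periods of factors of $\mathbf p$, which the automaton-based argument sidesteps.
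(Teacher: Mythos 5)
Your proposal is correct, and it is essentially the route the paper itself alludes to for this particular theorem: the statement appears in Section~\ref{other}, where the authors say only that they ``carried out similar computations'' and omit all details. Your pipeline --- automaton for novel unbordered factors of $\bf p$, linear representation, verification of the relations $f_{\bf p}(2n) = f_{\bf p}(n)$ and $f_{\bf p}(2n+1) = f_{\bf p}(n)$, then induction from $f_{\bf p}(1) = 2$ --- is sound; the two relations do hold for $n \geq 1$ (your side-condition caveat correctly covers the failure of the odd relation at $n = 0$), and since they are true, the spanning-set verification you describe (which, unlike the indeterminate-entry shortcut, is complete) is guaranteed to succeed. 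However, the only \emph{written-out} proof of this fact in the paper is quite different: the authors prove the stronger Theorem~\ref{unbord}, that for every $k \geq 2$ the generalized period-doubling word ${\bf p}_k$ (with ${\bf p}_2 = {\bf p}$) has exactly two unbordered factors of each length, and that these are reversals of each other. That proof is purely combinatorial: Lemma~\ref{rev} shows the factor set of ${\bf p}_k$ is closed under reversal (your ``structural sanity check'' is exactly this lemma); Lemma~\ref{ham} shows that every unbordered factor is, up to reversal and stripping a block of $0$'s, the image under the morphism $\varphi_k$ of a shorter unbordered factor; Lemma~\ref{ham5} gives the converse construction guaranteeing existence; and an induction on length then pins the count at exactly $2$. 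The trade-off is clear: your computational argument is mechanical and needs no insight into the structure of the unbordered factors (which, as you rightly note, is delicate --- the naive guess $0^{n-1}1$, $10^{n-1}$ already fails at length $5$), but it treats only the single fixed sequence $\bf p$, cannot be run uniformly over all $k$, and certifies the count without exhibiting the factors; the paper's argument costs several structure lemmas and a careful case analysis, but it is uniform in $k$, identifies the two factors explicitly as a reversal pair, and requires no machine computation at all.
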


The period-doubling sequence can be generalized to base $k \geq 2$, as follows:
$${\bf p}_k := (\nu_k (n+1) \bmod 2)_{n \geq 0},$$
where $\nu_k(x)$ is the exponent of the largest power of $k$ dividing
$x$.  For each $k$, the corresponding sequence ${\bf p}_k$ is a binary
sequence that is $k$-automatic:

\begin{theorem}
Let $k$ be an integer $\geq 2$.
The sequence ${\bf p}_k$ is the fixed point of the morphism
$\varphi_k$, where
\begin{eqnarray*}
\varphi_k (0) &=& 0^{k-1}\, 1 \\
\varphi_k (1) &=& 0^k .
\end{eqnarray*}
\end{theorem}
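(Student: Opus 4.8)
The plan is to exploit the fact that $\varphi_k$ is a $k$-uniform morphism that is prolongable on $0$: since $\varphi_k(0) = 0^{k-1}1$ begins with $0$ and has length $k \geq 2$, the morphism admits a unique fixed point ${\bf w} = w_0 w_1 w_2 \cdots$ beginning with $0$. First I would record the recursive description of ${\bf w}$ coming from uniformity. Because ${\bf w} = \varphi_k({\bf w})$ and each $\varphi_k(w_n)$ occupies exactly the positions $kn, kn+1, \ldots, kn+k-1$, we have that $w_{kn+r}$ is the $(r+1)$-st letter of $\varphi_k(w_n)$ for every $n \geq 0$ and every $r$ with $0 \leq r \leq k-1$. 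Reading off the two images, this says that the length-$k$ block $w_{kn} w_{kn+1} \cdots w_{kn+k-1}$ equals $0^{k-1}1$ when $w_n = 0$ and equals $0^k$ when $w_n = 1$. My goal is then to show that ${\bf p}_k$ satisfies this very same recursion and has the same first letter, whence ${\bf p}_k = {\bf w}$.

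Second, I would verify the initial condition and the recursion directly from the definition $p_n = \nu_k(n+1) \bmod 2$. We have $p_0 = \nu_k(1) \bmod 2 = 0$, matching $w_0$. For the recursion, fix $n \geq 0$ and examine $\nu_k(kn+r+1)$, splitting on whether $r$ reaches $k-1$. For $0 \leq r \leq k-2$ we have $kn+r+1 \equiv r+1 \pmod k$ with $1 \leq r+1 \leq k-1$, so $k \nmid kn+r+1$; hence $\nu_k(kn+r+1) = 0$ and $p_{kn+r} = 0$. For the remaining value $r = k-1$ we have $kn+r+1 = k(n+1)$, so $\nu_k(k(n+1)) = 1 + \nu_k(n+1)$ and therefore $p_{kn+k-1} = (1+\nu_k(n+1)) \bmod 2 = 1 - p_n$.

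Third, I would combine these two computations. The block $p_{kn} \cdots p_{kn+k-1}$ always begins with $0^{k-1}$, and its last letter is $1 - p_n$. When $p_n = 0$ this block is $0^{k-1}1 = \varphi_k(0)$, and when $p_n = 1$ it is $0^k = \varphi_k(1)$; in both cases the block equals $\varphi_k(p_n)$, which is precisely the recursion characterizing ${\bf w}$. Together with $p_0 = w_0 = 0$, a straightforward induction on $n$ (the block at indices $kn, \ldots, kn+k-1$ is determined by $p_n$, while $p_0$ is pinned down) yields $p_m = w_m$ for all $m \geq 0$, so ${\bf p}_k = {\bf w}$ is the fixed point of $\varphi_k$.

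The only nontrivial point, and thus the step I expect to be the main obstacle, is the parity flip at the last position of each block: this is where the $\bmod 2$ in the definition of ${\bf p}_k$ interacts with the extra factor of $k$ in $k(n+1)$. Everything else is bookkeeping, and the single identity $\nu_k(k(n+1)) = 1 + \nu_k(n+1)$ drives the whole argument, so I would make sure to state and justify it cleanly before assembling the blocks.
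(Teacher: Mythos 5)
Your proposal is correct and follows essentially the same route as the paper: both compute the block $p_{kn}\cdots p_{kn+k-1}$ directly from the definition, showing the first $k-1$ entries vanish because $k \nmid kn+r+1$ for $0 \le r \le k-2$, and that the last entry flips parity via $\nu_k(k(n+1)) = 1+\nu_k(n+1)$, so the block equals $\varphi_k(p_n)$. Your write-up is merely more explicit than the paper's about the final step (uniqueness of the fixed point of a prolongable uniform morphism plus induction), which the paper compresses into ``It follows that ${\bf p}_k$ is the fixed point of $\varphi_k$.''
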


\begin{proof}
Note that ${\bf p}_k (n) = c$ iff
$\nu_k (n+1) = 2j + c$ for some integer $j \geq 0$, and $c \in \lbrace 0,1
\rbrace$.

If $0 \leq a < k-1$, then
${\bf p}_k (kn+a) = \nu_k (kn+a+1) \bmod 2 = 0$.
If $a = k-1$ we have
${\bf p}_k (kn+a) = \nu_k (kn+k) \bmod 2 =
\nu_k(k(n+1)) \bmod 2 = (2j+c+1) \bmod 2$.
Hence if ${\bf p}_k(n) = 0$, then ${\bf p}_k[kn..kn+k-1] = 0^{k-1} \, 1$, while
if ${\bf p}_k (n) = 1$, then ${\bf p}_k[kn..kn+k-1] = 0^k$.
It follows that ${\bf p}_k$ is the fixed point of $\varphi_k$.
\end{proof}

The generalized sequence
${\bf p}_k$ has the same property of unbordered factors as
the period-doubling sequence:

\begin{theorem}
The number of unbordered factors of ${\bf p}_k$ of length $n$, for
$k \geq 2$ and $n \geq 1$, is equal to $2$, and the two unbordered
factors are reversals of each other.
\label{unbord}
\end{theorem}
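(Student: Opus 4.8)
The plan is to collapse the count to a single rigid question about one ``shape'' of factor, and then settle that question using the self-similarity of ${\bf p}_k$. First I would record the coarse shape of the sequence. Since ${\bf p}_k(n)=1$ exactly when $\nu_k(n+1)$ is odd, we can never have ${\bf p}_k(n)={\bf p}_k(n+1)=1$ (this would force $k\mid n+1$ and $k\mid n+2$), so the $1$'s are isolated and ${\bf p}_k = 0^{a_0}\,1\,0^{a_1}\,1\,0^{a_2}\,1\cdots$ for suitable integers $a_i$. Consecutive $1$'s sit at positions congruent to $k-1\pmod k$, so each $a_i\equiv k-1\pmod k$; a short check (using $\varphi_k$, or directly the fact that the gap between consecutive $0$-positions of ${\bf p}_k$ is $1$ or $2$) shows that in fact every $a_i\in\{k-1,\,2k-1\}$. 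This two-value run structure is the combinatorial backbone of the whole argument.

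Next comes the decisive reduction, which also explains the ``reversals'' phenomenon. If $w$ is a factor with $|w|\ge 2$ whose first and last letters agree, then that common letter is a nonempty proper border, so $w$ is bordered; in particular the all-zero word $0^n$ is bordered for $n\ge 2$. Hence every unbordered factor of length $\ge 2$ begins and ends with different letters, i.e.\ has type $0\cdots 1$ or type $1\cdots 0$. Since $x$ is a border of $w$ iff $x^R$ is a border of $w^R$, unborderedness is reversal-invariant, and reversal exchanges the two types. I would obtain closure of the factor set under reversal from a reflection symmetry of $\nu_k$: for $0<x<k^T$, writing $j=\nu_k(x)<T$ gives $k^T-x=k^j(k^{T-j}-x/k^j)$ with $k^{T-j}-x/k^j\not\equiv 0\pmod k$, so $\nu_k(k^T-x)=\nu_k(x)$; thus for a factor $w={\bf p}_k[n\,..\,n+\ell-1]$, choosing $k^T>n+\ell$ and setting $n'=k^T-n-\ell-1\ge 0$ yields ${\bf p}_k[n'\,..\,n'+\ell-1]=w^R$, so $w^R$ is again a factor. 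Consequently the unbordered factors of each length $n\ge 2$ fall into reversal-pairs containing one factor of each type, and the theorem reduces to: \emph{for every $n\ge 2$ there is exactly one unbordered factor of type $0\cdots 1$}. (The case $n=1$ is immediate, since $0$ occurs at position $0$ and $1$ at position $k-1$.)

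The heart of the proof is this remaining claim, which I would attack by strong induction on $n$ via desubstitution along the length-$k$ blocks $\varphi_k(0)=0^{k-1}1$ and $\varphi_k(1)=0^k$. A type-$0\cdots 1$ factor ends in a $1$, hence ends exactly at a block boundary, so it reads $0^s\,1\,0^{g_1}1\cdots 0^{g_{m-1}}1$ with $s\ge 1$ and each $g_i\in\{k-1,2k-1\}$, and one argues that any border must respect this rigid run pattern; the two admissible run lengths differ by exactly $k$, and the run sequence is itself $k$-automatic and self-similar, which is what should force both existence and uniqueness of the unbordered representative simultaneously at each length. The main obstacle is precisely this inductive step. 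The naive hope ``$w$ is unbordered iff its block-preimage is unbordered'' fails, because $\varphi_k(1)=0^k$ is heavily bordered, so unborderedness cannot simply be transported through $\varphi_k$. The real work is to analyze how a hypothetical border aligns with the runs, peel off a bounded prefix and suffix, and descend to a strictly shorter factor of the same type uniformly in $k$; controlling the partial leading run $0^s$, which need not be block-aligned, is the delicate point on which the whole induction turns.
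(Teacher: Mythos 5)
Your setup is sound and partially parallels the paper's: the run-length structure ($1$'s isolated, runs of $0$'s of length $k-1$ or $2k-1$), the closure of the factor set under reversal via the symmetry $\nu_k(k^T-x)=\nu_k(x)$ (the paper's Lemma~\ref{rev} uses the same identity), and the observation that unbordered factors of length $\ge 2$ must have distinct first and last letters, so they pair off under reversal and the theorem reduces to showing there is exactly one unbordered factor of type $0\cdots 1$ of each length. All of that is correct.

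But the proof stops exactly where the proof has to begin. Your final paragraph never establishes either existence or uniqueness of the type-$0\cdots 1$ unbordered factor: it says ``one argues that any border must respect this rigid run pattern,'' that self-similarity ``should force'' the conclusion, and that ``the real work is to analyze how a hypothetical border aligns with the runs \dots controlling the partial leading run $0^s$ \dots is the delicate point on which the whole induction turns.'' That delicate point is the entire mathematical content of the theorem, and it is left as a plan rather than carried out. Concretely, two lemmas are missing. For uniqueness you need a desubstitution statement (the paper's Lemma~\ref{ham}): an unbordered factor $w$ ending in $1$ ends at a block boundary, and after prepending $0^{k-a}$ (when $|w|\not\equiv 0 \pmod k$) it becomes $\varphi_k(x)$ for a factor $x$ which must itself be unbordered --- this uses only the easy direction ``$x$ bordered $\Rightarrow \varphi_k(x)$ bordered'' (Lemma~\ref{unbl}), plus the argument that a border of $x$ of any length lifts to a border of $w$ of length $\ge a$. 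For existence you need the hard direction in a restricted form (the paper's Lemma~\ref{ham5}): if $1x0$ is unbordered then $0^i\varphi_k(x0)$ is unbordered for $1\le i\le k$; the proof pins down a hypothetical border $u$ by noting $u$ ends in $1$, every $1$ in $\varphi_k$-images sits at a position $\equiv k-1 \pmod k$, hence $|u|\equiv \amod{i} {k}$, which lets you pull the border back to a border of $1x0$ --- precisely the alignment argument you gesture at but do not supply, and the reason the naive ``unbordered lifts through $\varphi_k$'' (which, as you correctly note, fails in general because $\varphi_k(1)=0^k$) can be salvaged for words of the form $\varphi_k(1x0)$. Without these two lemmas and the case analysis combining them (including the $a=1$ case, where four apparent candidates collapse to two via the identity $0^{k-1}\varphi_k(x)^R=\varphi_k(x^R)0^{k-1}$), the proposal is an outline, not a proof.
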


We begin with some useful lemmas.

\begin{lemma}
Let $x \in \lbrace 0, 1 \rbrace^*$ be a word.  Then 
$0^{k-1} \, \varphi_k (x)^R = \varphi_k (x^R) \, 0^{k-1}$.
\label{rever}
\end{lemma}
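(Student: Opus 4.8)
The plan is to prove the identity by induction on the length of $x$, reducing everything to a single one-letter commutation fact about the block $0^{k-1}$. The two structural properties I will use throughout are that $\varphi_k$ is a morphism, so $\varphi_k(uv) = \varphi_k(u)\varphi_k(v)$, and that reversal is an anti-morphism, so $(uv)^R = v^R u^R$.

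The heart of the argument is the following one-letter identity: for each $a \in \lbrace 0,1 \rbrace$,
$$0^{k-1}\,\varphi_k(a)^R = \varphi_k(a)\,0^{k-1}.$$
I would verify this by direct computation in the two cases. For $a = 0$ we have $\varphi_k(0) = 0^{k-1}1$, so $\varphi_k(0)^R = 1\,0^{k-1}$ and both sides equal $0^{k-1}\,1\,0^{k-1}$. For $a = 1$ we have $\varphi_k(1) = 0^k = \varphi_k(1)^R$, and both sides equal $0^{2k-1}$. Thus the block $0^{k-1}$ can be ``slid'' from the left of $\varphi_k(a)^R$ to the right, at the cost of undoing the reversal on the single image $\varphi_k(a)$.

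For the induction, the base case $x = \varepsilon$ is immediate since both sides equal $0^{k-1}$. For the step, write $x = ay$ with $a \in \lbrace 0,1 \rbrace$ and $|y| < |x|$. Then $\varphi_k(x)^R = (\varphi_k(a)\varphi_k(y))^R = \varphi_k(y)^R\,\varphi_k(a)^R$, so the left-hand side is $0^{k-1}\,\varphi_k(y)^R\,\varphi_k(a)^R$. Applying the induction hypothesis to $y$ turns $0^{k-1}\,\varphi_k(y)^R$ into $\varphi_k(y^R)\,0^{k-1}$, and then the one-letter identity turns $0^{k-1}\,\varphi_k(a)^R$ into $\varphi_k(a)\,0^{k-1}$. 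Collecting the terms gives $\varphi_k(y^R)\,\varphi_k(a)\,0^{k-1} = \varphi_k(y^R a)\,0^{k-1} = \varphi_k(x^R)\,0^{k-1}$, since $x^R = y^R a$. This is exactly the right-hand side.

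I do not expect a genuine obstacle here: the statement is essentially a bookkeeping exercise about how the pending $0^{k-1}$ block migrates through the word under reversal. The one subtlety to get right is the order reversal induced by the anti-morphism property, which is why peeling the letter off the front of $x$ (and recognizing it as the back of $x^R$) is the natural decomposition; the entire content is then concentrated in the one-letter identity above.
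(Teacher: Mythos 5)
Your proof is correct, and it reaches the identity by a genuinely different (though closely related) route than the paper. The paper avoids induction entirely: it writes each letter image uniformly as $\varphi_k(a) = 0^{k-1}\,\overline{a}$ (where $\overline{a} = 1-a$), so that $\varphi_k(x)^R$ becomes the product of blocks $\overline{a_{n+1-i}}\,0^{k-1}$ taken in reversed order, and then the whole identity follows in a single chain of equalities by prepending $0^{k-1}$ and re-associating the product, turning each block $\overline{a}\,0^{k-1}$ into $0^{k-1}\,\overline{a} = \varphi_k(a)$ with one $0^{k-1}$ left over at the end. Your induction on $|x|$, driven by the one-letter commutation $0^{k-1}\,\varphi_k(a)^R = \varphi_k(a)\,0^{k-1}$, is exactly that re-association carried out one letter at a time, and your two-case verification of the one-letter identity replaces the paper's uniform complement formula. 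What the paper's version buys is compactness: a single displayed computation with no induction hypothesis to manage. What yours buys is modularity and transparency: the entire content is isolated in a trivially checkable commutation fact, and the rest is generic bookkeeping with the morphism and anti-morphism properties, so the argument would adapt verbatim to any morphism whose letter images share a common prefix that is also a common suffix after reversal. Both proofs are complete and correct; there is no gap in yours.
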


\begin{proof}
Suppose $x = a_1 a_2 \cdots a_n$, where each $a_i \in \lbrace 0, 1 \rbrace$.
If $a \in \lbrace 0, 1 \rbrace$, let $\overline{a}$ denote $1-a$.
Then
\begin{eqnarray*}
0^{k-1} \, \varphi_k (x)^R &=&  0^{k-1} \, \left( \prod_{1 \leq i \leq n}
	\varphi_k (a_i) \right)^R \\
&=& 0^{k-1} \, \left( \prod_{1 \leq i \leq n} 0^{k-1}  \ 
	\overline{a_i} \ \right)^R  \\
&=& 0^{k-1} \left( \prod_{1 \leq i \leq n} \overline{a_{n+1-i}} \ 0^{k-1} \right) \\
&=& \left( \prod_{1 \leq i \leq n} 0^{k-1} \ \overline{a_{n+1-i}} \right) 0^{k-1} \\
&=& \left( \prod_{1 \leq i \leq n} \varphi_k (a_{n+1-i}) \right) 0^{k-1
} \\
&=& \varphi_k (x^R) \, 0^{k-1}.
\end{eqnarray*}
\end{proof}

\begin{lemma}
If the word $w$ is bordered, then $\varphi_k (w)$ is bordered.
\label{unbl}
\end{lemma}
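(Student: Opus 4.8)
The plan is to exploit the fact that any morphism sends a factorization to a factorization, so it carries borders to borders; the $k$-uniformity of $\varphi_k$ then guarantees that the image of a \emph{proper} border remains proper. First I would unpack the hypothesis: since $w$ is bordered, there is a nonempty word $x \neq w$ that is simultaneously a prefix and a suffix of $w$, so there exist words $u$ and $v$ with $w = xu = vx$ and $0 < |x| < |w|$.

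Next I would apply $\varphi_k$ to both factorizations. Because $\varphi_k$ is a morphism, we have $\varphi_k(w) = \varphi_k(x)\,\varphi_k(u) = \varphi_k(v)\,\varphi_k(x)$, which exhibits $\varphi_k(x)$ as both a prefix and a suffix of $\varphi_k(w)$.

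Finally I would check that $\varphi_k(x)$ is a genuine border, that is, nonempty and proper. It is nonempty because $x$ is nonempty and $\varphi_k$ is nonerasing. It is proper because $\varphi_k$ is $k$-uniform, so $|\varphi_k(x)| = k|x| < k|w| = |\varphi_k(w)|$, whence $\varphi_k(x) \neq \varphi_k(w)$. This makes $\varphi_k(w)$ bordered. I do not expect a serious obstacle; the only point needing care is confirming that the border stays proper under the morphism, and this is precisely where the uniformity of $\varphi_k$---rather than the particular shape of its images---is used.
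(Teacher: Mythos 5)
Your proof is correct and takes essentially the same approach as the paper: apply the morphism $\varphi_k$ to a border decomposition of $w$ and observe that the image of the border is a border of $\varphi_k(w)$. The only difference is cosmetic---the paper writes $w = xyx$ with $x \neq \epsilon$ (implicitly choosing a border short enough not to overlap itself), whereas you work from the factorizations $w = xu = vx$ and check properness via uniformity, which sidesteps even that small implicit step.
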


\begin{proof}
If $w$ is bordered, then $w = xyx$ for $x \not= \epsilon$.  Then
$\varphi_k (w) = \varphi_k(x) \varphi_k (y) \varphi_k (x)$ is
bordered.
\end{proof}

\begin{lemma}
If $w$ is a factor of ${\bf p}_k$, then so is $w^R$.
\label{rev}
\end{lemma}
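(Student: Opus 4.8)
The plan is to prove the equivalent statement that the set of factors of ${\bf p}_k$ is closed under reversal, by reducing the problem to the prefixes $\varphi_k^n(0)$ and inducting on $n$. Since ${\bf p}_k$ is the fixed point of $\varphi_k$, the words $\varphi_k^n(0)$ form an increasing chain of prefixes of ${\bf p}_k$ whose lengths tend to infinity; hence every factor $w$ of ${\bf p}_k$ is a factor of $\varphi_k^n(0)$ for some $n$. If I can show that $(\varphi_k^n(0))^R$ is itself a factor of ${\bf p}_k$ for every $n$, then $w^R$, being a factor of $(\varphi_k^n(0))^R$, is also a factor of ${\bf p}_k$, and the lemma follows.

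So the real goal is: for all $n \geq 0$, the reversal $(\varphi_k^n(0))^R$ is a factor of ${\bf p}_k$. I would argue this by induction on $n$. The base case $n = 0$ is immediate, since $\varphi_k^0(0) = 0$ is a factor. For the inductive step I would invoke Lemma~\ref{rever} with $x = \varphi_k^{n-1}(0)$, which gives $0^{k-1}\,(\varphi_k^n(0))^R = \varphi_k\bigl((\varphi_k^{n-1}(0))^R\bigr)\,0^{k-1}$. Thus it suffices to show the right-hand side is a factor of ${\bf p}_k$.

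To do this, assume inductively that $(\varphi_k^{n-1}(0))^R$ is a factor of ${\bf p}_k$. Because ${\bf p}_k$ is infinite, this occurrence is followed by some letter $a \in \lbrace 0,1 \rbrace$, so $(\varphi_k^{n-1}(0))^R\, a$ is a factor; applying $\varphi_k$ and using that ${\bf p}_k$ is a fixed point of $\varphi_k$, the word $\varphi_k\bigl((\varphi_k^{n-1}(0))^R\bigr)\,\varphi_k(a)$ is a factor of ${\bf p}_k$. Since both $\varphi_k(0) = 0^{k-1}1$ and $\varphi_k(1) = 0^k$ begin with $0^{k-1}$, the prefix $\varphi_k\bigl((\varphi_k^{n-1}(0))^R\bigr)\,0^{k-1}$ of this word is itself a factor of ${\bf p}_k$. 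By the identity above this equals $0^{k-1}\,(\varphi_k^n(0))^R$, so $(\varphi_k^n(0))^R$ is a factor of ${\bf p}_k$, completing the induction.

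The one subtle point — and the only place the argument could go astray — is the appending of $0^{k-1}$ on the right of $\varphi_k\bigl((\varphi_k^{n-1}(0))^R\bigr)$. This is exactly what Lemma~\ref{rever} is designed to supply: the padding block $0^{k-1}$ sits on the left of $(\varphi_k^n(0))^R$ but on the right of $\varphi_k\bigl((\varphi_k^{n-1}(0))^R\bigr)$, and manufacturing that trailing block is what forces me to extend the factor by one letter before applying the morphism and to use that every $\varphi_k(a)$ begins with $0^{k-1}$. Everything else is bookkeeping, so I expect no serious obstacle once Lemma~\ref{rever} is in hand.
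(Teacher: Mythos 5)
Your proof is correct, but it takes a genuinely different route from the paper's. The paper argues arithmetically from the definition ${\bf p}_k(n) = \nu_k(n+1) \bmod 2$: every factor lies in some prefix ${\bf p}_k[0..k^i-1]$, and the valuation symmetry $\nu_k(k^i - a) = \nu_k(k^i + a)$ for $0 \leq a < k^i$ shows directly that ${\bf p}_k[0..k^i-1]^R = {\bf p}_k[k^i-1..2k^i-2]$, so the reversed prefix occurs explicitly (in fact this exhibits a palindrome of length $2k^i-1$ centered at position $k^i-1$). You instead work purely with the morphic structure: you reduce to the prefixes $\varphi_k^n(0)$ and run an induction whose step combines Lemma~\ref{rever} with the extend-by-one-letter trick (appending the next letter $a$, applying $\varphi_k$, and using that both $\varphi_k(0)$ and $\varphi_k(1)$ begin with $0^{k-1}$ to manufacture the trailing block $0^{k-1}$); all steps check out, including the fact that images of factors under $\varphi_k$ are factors of the fixed point, and there is no circularity since Lemma~\ref{rever} precedes Lemma~\ref{rev} and does not depend on it. What each approach buys: the paper's argument is shorter and pinpoints where the reversal occurs, but it leans on the number-theoretic characterization of ${\bf p}_k$; yours never touches that characterization and would generalize to any fixed point of a morphism satisfying a mirror identity of the form $0^{k-1}\varphi(x)^R = \varphi(x^R)0^{k-1}$, at the cost of a longer induction.
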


\begin{proof}
If $w$ is a factor of ${\bf p}_k$, then it is a factor of some
prefix ${\bf p}_k [0..k^i - 1]$ for some $i \geq 1$.  So it suffices
to show that ${\bf p}_k [0..k^i - 1]^R$ appears as a factor of
${\bf p}_k$.  In fact, we claim that 
$${\bf p}_k [0..k^i-1]^R = {\bf p}_k [k^i - 1..2k^i - 2].$$

To see this, it suffices to observe that $\nu_k (k^i - a) = \nu_k(k^i + a)$ for
$0 \leq a < k^i$.
\end{proof}

The following lemma describes the unbordered factors of $\varphi_k$.
If $w = 0^a x$, then by $0^{-a} \, w$ we mean the word $x$.

\begin{lemma}
\begin{itemize}
\item[(a)]  If $w$ is an unbordered factor of ${\bf p}_k$ and
$|w| \equiv \amod{0} {k}$, then $w = \varphi_k(x)$ or
$w = \varphi_k (x)^R$, for some unbordered factor $x$ of
${\bf p}_k$ with $|x| = |w|/k$.

\item[(b)]  If $w$ is an unbordered factor of ${\bf p}_k$ and
$|w| \equiv \amod{a} {k}$ for $0 < a < k$, then 
$w = 0^{a-k} \, \varphi_k (x)$ or
$w = \varphi_k (x)^R \, 0^{a-k}$, for some unbordered factor 
$x$ of ${\bf p}_k$ with $|x|= (|w|-a)/k + 1$.
\end{itemize}
\label{ham}
\end{lemma}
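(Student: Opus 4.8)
The plan is to exploit the rigid block structure of $\mathbf{p}_k$. Since $\varphi_k(0)=0^{k-1}\,1$ and $\varphi_k(1)=0^k$, the word $\mathbf{p}_k$ is a concatenation of length-$k$ blocks, each equal to $0^{k-1}\,1$ or $0^k$, and the only $1$'s occur at positions $\equiv k-1 \pmod k$. Consequently, for a factor occurring at position $p$, its first symbol is $1$ exactly when $p\equiv k-1 \pmod k$, and its last symbol is $1$ exactly when it ends at a position $\equiv k-1 \pmod k$, i.e.\ at a block boundary. I would reduce the whole argument to one ``good'' configuration, namely that \emph{$w$ ends at a block boundary}: there I can peel off the complete blocks from the right and desubstitute under $\varphi_k$.

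First I would make the alignment reduction. If $|w|\le 1$ the statement is checked directly (for instance $1=0^{1-k}\varphi_k(0)$ and $0=0^{1-k}\varphi_k(1)$), so assume $|w|\ge 2$. If $w$ both began and ended with $0$, then the single letter $0$ would be a nonempty proper border, contradicting unborderedness; hence $w$ starts with $1$ or ends with $1$. If $w$ ends with $1$ we are already in the good configuration. If instead $w$ starts with $1$, then $w^R$ ends with $1$; since $w^R$ is again a factor of $\mathbf{p}_k$ by Lemma~\ref{rev} and is unbordered (borders reverse to borders), $w^R$ lies in the good configuration. Any representation obtained for $w^R$ then yields one for $w$ by reversing, turning $\varphi_k(x)$ into $\varphi_k(x)^R$ and $0^{a-k}\varphi_k(x)$ into $\varphi_k(x)^R\,0^{a-k}$; this is exactly where the two alternatives in each part arise, and Lemma~\ref{rever} can be invoked to keep the reversed images in normal form.

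In the good configuration the ending position $q$ satisfies $q\equiv k-1 \pmod k$, so the starting offset is forced by the length: $r\equiv k-|w|\equiv -a \pmod k$. For $|w|\equiv 0 \pmod k$ this gives $r=0$, so $w$ consists of whole blocks and $w=\varphi_k(x)$ with $|x|=|w|/k$; for $0<a<k$ it gives $r=k-a$, so $w$ is the tail of one block followed by whole blocks, i.e.\ $w=0^{a-k}\varphi_k(x)$ (that is, delete the leading $k-a$ zeros, which is legitimate since every block begins with at least $k-1\ge k-a$ zeros), with $|x|=(|w|-a)/k+1$. It remains to show $x$ is unbordered. When $a=0$ this is immediate from the contrapositive of Lemma~\ref{unbl}. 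When $0<a<k$, suppose $u$ is a nonempty proper border of $x$; then $\varphi_k(u)$ is both a prefix and a suffix of $\varphi_k(x)$, and I claim $0^{a-k}\varphi_k(u)$ is a nonempty proper border of $w$. It is a prefix of $w$ by construction, and it also coincides with the suffix of $w$ of equal length, because that suffix is the trailing copy of $\varphi_k(u)$ with its first $k-a$ (zero) symbols deleted; a short length count shows this border is nonempty and proper, contradicting the unborderedness of $w$.

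The main obstacle is this last unborderedness transfer in part~(b): stripping the leading zeros destroys the clean ``$\varphi_k$ of a border is a border'' argument that settles part~(a), and one must observe that the zeros deleted from the front of the leading $\varphi_k(u)$ are exactly matched by the zeros at the front of the trailing copy, so that the truncated word $0^{a-k}\varphi_k(u)$ genuinely reappears as a suffix of $w$. Everything else is bookkeeping: the block structure and the positions of the $1$'s make the alignment completely rigid, while Lemmas~\ref{rever}--\ref{rev} package the reversal symmetry relating the two alternatives in the statement of Lemma~\ref{ham}.
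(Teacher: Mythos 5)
Your proof is correct and takes essentially the same route as the paper's: the same case split (ends with $1$ versus begins with $1$, the latter reduced to the former via Lemma~\ref{rev} and reversal of borders), the same desubstitution into whole $\varphi_k$-blocks after padding with $0^{k-a}$ on the left, and the same unborderedness transfer (contrapositive of Lemma~\ref{unbl} in part (a), and in part (b) the observation that a border $u$ of $x$ yields the truncated border $0^{a-k}\varphi_k(u)$ of $w$, which is the paper's ``border of length $\geq k$ of $0^{k-a}w$ gives a border of length $\geq a$ of $w$'' step made explicit). The only cosmetic difference is that you obtain the alignment from the block structure of the fixed point while the paper uses $\nu_k$-arithmetic on positions; just note that your claim ``its first symbol is $1$ exactly when $p\equiv k-1 \pmod{k}$'' should be ``only if'' (a position $\equiv k-1 \pmod{k}$ may carry a $0$), which is the direction your argument actually uses.
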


\begin{proof}
(a):  Suppose that $w = {\bf p}_k[i..i+kn-1]$ for some integer $i$.
There are two cases to consider:  ${\bf p}_k [i] = 0$ and 
${\bf p}_k [i] = 1$.

Suppose ${\bf p}_k [i] = 0$.  Since $\bf w$ is unbordered, we
have ${\bf p}_k [i+kn-1] = 1$.  Then $\nu_k (i+kn) \geq 1$, so
$i+kn = km$ for some $m \geq 0$.  Then $i = k(m-n)$ is a multiple of
$k$, so $w = \varphi_k (x)$, where $x = {\bf p}_k [i/k..i/k+n-1]$.   Note
that $|x| = |w|/k$.  Finally, Lemma~\ref{unbl} shows that
$x$ is unbordered.

Suppose ${\bf p}_k [i] = 1$.  Since $w$ is unbordered, we
have ${\bf p}_k[i+kn-1] = 0$.  From Lemma~\ref{rev} we know that
$w^R$ is also a factor of ${\bf p}_k$ (and also is unbordered).
Then from the previous paragraph, we see that
$w^R = \varphi_k (x)$ for some unbordered factor $x$ of
${\bf p}_k$, with $|x| = |w|/k $.
Then $w = \varphi_k (x)^R$, as desired.

\medskip

(b):      Suppose that $w = {\bf p}_k [i..i+kn+a-1]$ for
$0 < a < k$.  There are two cases to consider:  ${\bf p}_k [i] = 0$
and ${\bf p}_k [i] = 1$.

Suppose that ${\bf p}_k [i] = 0$.  Since $w$ is unbordered, we
know that ${\bf p}_k [i+kn+a-1] = 1$.  Then
$\nu_k (i+kn+a) \geq 1$, so $i + kn + a = km$ for some $m \geq 0$.
Then $i-(k-a) = k(m-n-1)$ is a multiple of $k$.  Hence
$$0^{k-a} \, w = {\bf p}_k [i-(k-a)..i+kn+a-1] = 
	\varphi_k( {\bf p}_k [ (i+a)/k - 1 ..(i+a)/k + n-1 ] ).$$
Let $x = {\bf p}_k [(i+a)/k - 1 .. (i+a)/k + n - 1]$.  Then
$w = 0^{a-k} \, \varphi_k (x)$, and $|x| = (|w|-a)/k + 1$.
If $x$ is bordered, then using Lemma~\ref{unbl} we have that
$0^{k-a} \, w$ has a border of length $\geq k$,
so $w$ has a border of length at least $a$, a contradiction.

Suppose that ${\bf p}_k [i] = 1$.  Since $w$ is unbordered,
we know that ${\bf p}_k [i+kn+a-1] = 0$.  Then by 
Lemma~\ref{rev} we know that $w^R$ is also an unbordered factor
of ${\bf p}_k$.  Then from the previous paragraph, we get that
$w^R = 0^{a-k} \, \varphi_k (x)$ for some unbordered factor $x$ of
${\bf p}_k$ where $|x|  = (|w|-a)/k + 1$.  So $w =
\varphi_k (x)^R \, 0^{a-k}$, as desired.

\end{proof}

\begin{lemma}
Let $x$ be a word and $w = 1x0$ be an unbordered
word.  Then $0^i \varphi_k (x0)$ is unbordered
for $1 \leq i \leq k$.  
\label{ham5}
\end{lemma}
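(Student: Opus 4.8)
The plan is to show that $W_i := 0^i\,\varphi_k(x0)$ has no nontrivial period, deducing any such period from a period of $w = 1x0$ and thereby contradicting the hypothesis that $w$ is unbordered. Recall that a word is unbordered exactly when its only period is its own length, so it suffices to rule out periods $s$ with $1 \le s < |W_i|$. The structural fact I would exploit is that $\varphi_k$ is \emph{synchronizing}: the symbol $1$ occurs only at the end of each length-$k$ block. Writing $x0 = c_1 c_2 \cdots c_m$ with $c_m = 0$ and $m = |x|+1$, the word $\varphi_k(x0)$ carries a $1$ at position $jk$ (for $1 \le j \le m$) precisely when $c_j = 0$, and $0$ elsewhere. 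After prepending $0^i$, every occurrence of $1$ in $W_i$ sits at a position $\equiv i \pmod{k}$, and the last symbol of $W_i$ is $1$, since $\varphi_k(0) = 0^{k-1}1$.

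First I would record two elementary facts about any period $s$ of $W_i$. Since $W_i$ begins with $0^i$ and ends in $1$, a border of length $\le i$ would have to be a block of $0$'s (as a prefix) yet end in $1$ (as a suffix), which is impossible; hence every period satisfies $s < |W_i| - i$. Next, applying the period relation to the final symbol (position $N := |W_i| = i + mk$, which carries a $1$) forces position $N-s$ to carry a $1$ as well; since every $1$ lies at a position $\equiv i \pmod{k}$, this yields $s \equiv 0 \pmod{k}$. Writing $s = k s'$, the two facts together give $1 \le s' \le m-1$.

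The heart of the argument is then to push the period $s = ks'$ down through $\varphi_k$. Comparing the symbols of $W_i$ at the block-boundary positions $i + jk$ and $i + (j+s')k$ under the period relation, and using that these carry a $1$ iff $c_j = 0$, I would obtain $c_j = c_{j+s'}$ for $1 \le j \le m-s'$. The delicate point is the boundary case $j = 0$: position $i$ lies inside the prefix $0^i$ and carries a $0$, so the period relation forces position $i + s'k$ to carry a $0$ too, i.e.\ $c_{s'} = 1$. Setting $w = d_0 d_1 \cdots d_m$ with $d_0 = 1$ and $d_j = c_j$, the relation $c_{s'} = 1 = d_0$ is exactly what extends the equalities to $d_j = d_{j+s'}$ for all $0 \le j \le m-s'$; that is, $w$ has period $s'$ with $1 \le s' \le m-1 < |w|$, so $w$ is bordered, contradicting the hypothesis.

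I expect the only real obstacle to be the bookkeeping at the seam between the prepended $0^i$ and the image $\varphi_k(x0)$: verifying that the border is long enough (that is, $s' \le m-1$) so that the boundary comparison at $j=0$ is in range, and checking that the leading $1$ of $w$ is correctly recovered from the $j=0$ case rather than lost. Everything else is routine once the synchronizing structure of $\varphi_k$ and the constraint $s \equiv 0 \pmod{k}$ are in place. Note that the argument uses only that $w$ is unbordered and begins with $1$ and ends with $0$, not that $x$ is a factor of ${\bf p}_k$.
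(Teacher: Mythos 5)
Your proof is correct. It rests on the same structural fact as the paper's proof --- the images under $\varphi_k$ are synchronized, so every $1$ in $0^i\varphi_k(x0)$ sits at a position $\equiv i \pmod{k}$ --- but it deploys that fact along a genuinely different route. The paper argues with borders and splits into two cases: for $i=k$ it takes a hypothetical border $u$ of $\varphi_k(w)$, notes that $u$ must end in $1$, hence $|u|\equiv 0 \pmod{k}$, desubstitutes $u=\varphi_k(r)$, and obtains a border $r$ of $w$; for $1\le i<k$ it shows that a border $u$ of $0^i\varphi_k(x0)$ forces $\varphi_k(w)=0^{k-i}uvu$ to begin and end with $0^{k-i}u$, reducing to the first case. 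You instead work with periods and treat all $i$ uniformly: the congruence on positions of $1$'s forces any period of $W_i$ to have the form $ks'$, comparison at block-end positions transfers it to $c_j=c_{j+s'}$, and the seam position $i$ (a $0$ inside the prefix) supplies exactly the comparison $c_{s'}=1=d_0$ that recovers the leading $1$ of $w$, yielding a period $s'$ of $w$ itself. Your version avoids both the case split and the word-level desubstitution (no appeal to injectivity of $\varphi_k$), at the price of heavier index bookkeeping, which you handle correctly (the ranges $1\le s'\le m-1$ and the in-range check for the $j=0$ comparison are the delicate points, and they check out); the paper's version is shorter to write and makes the $i=k$ statement available as a stepping stone. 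Your closing remark that only ``$w$ unbordered, beginning with $1$ and ending with $0$'' is used, not membership in ${\bf p}_k$, matches the paper, whose lemma is likewise stated for arbitrary words.
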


\begin{proof}
If $i = k$ then $0^k \varphi_k(x0) = \varphi_k (1x0) =
\varphi_k (w)$.  Suppose $\varphi_k (w)$ is bordered;
then there exist $u \not= \epsilon$ and $v$ such that
$\varphi_k (w) = uvu$.  Since $\varphi_k(0) = 0^{k-1} 1$,
we know $u$ ends in $1$.  But since $u$ is a prefix
of $\varphi_k (w)$ that ends in $1$, it follows that
$|u| \equiv \amod{0} {k}$, and so $u$ is the image of
some word $r$ under $\varphi_k$.  Hence $w$ begins and
ends with $r$, a contradiction.

Now assume $1 \leq i < k$ and $0^i \varphi_k (x0)$ is
bordered.  Then there exist $u \not= \epsilon$ and
$v$ such that $0^i \varphi_k (x0) = uvu$; note that $u$ must
end in $1$.  It follows that
$$ \varphi_k (w) = \varphi_k (1x0) = 0^k \varphi_k (x0) 
	= 0^{k-i} (0^i \varphi_k (x0)) = 0^{k-i} uvu .$$
Since $0^{k-i} u$ and $0^{k-i} uvu$ both end in $1$ and
$0^{k-i} uvu = \varphi_k (w)$, we have $|vu| 
\equiv \amod{0} {k}$.   Hence $|u| \equiv \amod{i} {k}$.
It follows that $0^{k-i}uv$ ends in $0^k$, so
$0^{k-i} uvu = \varphi_k (w)$ begins and ends in
$0^{k-i} u$, a contradiction.
\end{proof}

We are now ready for the proof of Theorem~\ref{unbord}.

\begin{proof}
First, we show that there is at least one unbordered factor of
every length, by induction on $n$.  The base cases are
$n < 2k$, and are left to the reader.
Otherwise $n \geq 2k$.  Write $n = kn' + i$ where
$1 \leq i \leq k$.  
By induction
there is an unbordered word $w$ of length $n' + 1$.
Using Lemma~\ref{rev}, we can assume that $w$ begins with $1$
and ends with $0$, say $w = 1x0$.  By Lemma~\ref{ham5} we have
that $0^i \varphi_k (x0)$ is unbordered, and it is of length
$i + kn' = n$.

It remains to prove there are exactly $2$ unbordered factors
of every length.

If $n \leq 2k$, then it is easy to see that the only unbordered factors
are $1 \, 0^{n-1}$ and $0^{n-1} \, 1$.

Now assume $n > 2k$ and that
there are only two unbordered factors of length $n'$ for
all $n' < n$; we prove it for $n$.
Let $w$ be an unbordered factor of length $n$.

If $n \equiv \amod{0} {k}$, then by Lemma~\ref{ham} (a), 
we know that either $w = \phi_k (x)$ or
$w = \phi_k (x)^R$, where $x$ is an unbordered factor of
length $n/k$.    By induction there are exactly $2$ unbordered
factors of length $n/k$; by Lemma~\ref{rev} they are reverses of
each other.
Let $x$ be such an unbordered factor; since $|x| = n/k > 2$,
either $x$ begins with $0$ and ends with $1$, or
begins with $1$ and ends with $0$.  In the former case,
the image $w = \varphi_k (x)$ begins and ends with $0$, a contradiction.
So $x$ begins with $1$ and ends with $0$.    But there is only
one such factor, so there are only two possibilities for $w$.

Otherwise let $a = n \bmod k$; then $0 < a < k$.
By Lemma~\ref{ham} (b),
we know that $w = 0^{a-k} \, \varphi_k (x)$ or
$w = \varphi_k (x)^R \, 0^{a-k}$, where $x$ is an unbordered
factor of length $(|w|-a)/k + 1 \geq 2$.    
By induction there are exactly $2$ such unbordered words;
by Lemma~\ref{rev} they are reverses of
each other.
Let $x$ be such an unbordered factor;
then either $x$ begins with $0$ and ends with $1$, or
begins with $1$ and ends with $0$.  Let us call them $x_0$ and $x_1$,
respectively, with $x_0 = x_1^R$.  Now
$\varphi_k(x_0)$ begins with $0^{k-1} \, 1$, and ends with $0^k$.
Hence, provided $a \not = 1$, we see that $w = 0^{a-k} \, \varphi_k(x_0)$ begins
with $0$ and ends with $0$, a contradiction.  
If $a = 1$, Lemma~\ref{ham} (b) gives the two factors
$0^{1-k} \, \varphi_k (x_0)$ and
$\varphi_k(x_0)^R \, 0^{1-k}$.  The former begins with $1$ and ends with
$0$; the latter begins with $0$ and ends with $1$.

In the latter case,
$x_1$ begins with $1$ and ends with $0$. There is only one such $x_1$
(by induction), and then either
$w = 0^{a-k} \, \varphi_k (x_1)$ or
$w = \varphi_k (x_1)^R \, 0^{a-k}$, giving at most two possibilities for $w$.
In the case $a = 1$, these two factors would seem to give a total of
four factors of length $n$.  However, there are only two, since
\begin{eqnarray*}
0^{1-k} \, \varphi_k (x_0) &=&
	0^{1-k} \, \varphi_k (x_1^R) = \varphi_k(x_1)^R \, 0^{1-k} \\
\varphi_k(x_0)^R \, 0^{1-k} &=&
	0^{1-k} \, \varphi_k (x_0^R) = 0^{1-k} \, \varphi_k (x_1) \\
\end{eqnarray*}
This completes the proof.
\end{proof}

\end{document}